\documentclass{eptcs}
\usepackage{mathtools,mathpartir,amsmath,amssymb}
\usepackage{hyperref}
\usepackage{keyval}
\usepackage{prooftree}
\usepackage[all]{xy}
\usepackage{dsfont}
\usepackage{enumerate}
\usepackage{color}
\usepackage[dvipsnames]{xcolor}
\usepackage{tikz}                   
\usetikzlibrary{shadows,arrows,shapes,automata,positioning,decorations.markings,fit}
\usepgflibrary{arrows.spaced}
\usepackage{wrapfig}
\usepackage{breakurl}
\usepackage[misc,geometry]{ifsym}
\usepackage{orcidlink}

\title{Asynchronous Multiparty Sessions  with Mixed Choice
}

\def\titlerunning{Asynchronous  Multiparty Sessions  with Mixed Choice 
} 

\def\authorrunning{ 
Barbanera \& Dezani-Ciancaglini
}

\author{Franco Barbanera
\thanks{
Partially supported by 
Project  National Center for ``HPC, Big Data e Quantum Computing",  Programma M4C2 – dalla ricerca all’impresa – Investimento 1.3 – Next Generation EU; and by Project
PIA.CE.RI (PIAno di inCEntivi per la RIcerca di Ateneo) UniCT 2024/2026.
}
\institute{
Dipartimento di Matematica e Informatica,\\
Universit\`a di  Catania, Catania, Italy}
\email{franco.barbanera@unict.it}
\and
Mariangiola Dezani-Ciancaglini
\institute{
Dipartimento di Informatica,\\
Universit\`a di Torino, Torino, Italy}
\email{dezani@di.unito.it}
}

\newtheorem{definition}{Definition}[section]
\newtheorem{lemma}[definition]{Lemma}

\newtheorem{theorem}[definition]{Theorem}

\newtheorem{example}[definition]{Example}

\newenvironment{proof}{{\em Proof.}}{
}

\newcommand{\set}[1]{\{#1\}}
\newcommand{\rn}[1]{\textsc{\footnotesize [{#1}]}}
\newcommand{\coDef}{::=^{coind}}
\newcommand{\mse}{\mu}
\newcommand{\waux}[3]{\mathsf{w\text{-}aux}(#1,#2,#3)}
\newcommand{\w}[2]{\mathsf{w}(#1,#2)}
\newcommand{\tg}{tag}
\newcommand{\tge}{t}
\newcommand{\ms}{message}
\newcommand{\Ms}{Message}
\newcommand{\G}{{\sf G}}

\newcommand{\pplus}{\hspace{2pt}\scalebox{0.7}{\raisebox{1mm}{.....}\hspace{-2.6mm}\raisebox{-1.2mm}{\rotatebox{90}{.....}}}\hspace{6pt}}

\newcommand{\End}{\ensuremath{\mathtt{End}}}

\newcommand{\inact}{\ensuremath{\mathbf{0}}}
\newcommand{\PP}{P}
\newcommand{\Q}{Q}

\newcommand{\ptp}[1]{
  \ensuremath{\mathsf{\color{blue}{ #1}}}
}
\newcommand{\msg}[1]{\mathit{\color{BrickRed}{#1}}}

\newcommand{\pc}{\ptp{c}}

\newcommand{\pp}{\ptp{p}}
\newcommand{\q}{\ptp{q}}
\newcommand{\pq}{\ptp{q}}
\newcommand{\pr}{\ptp{r}}
\newcommand{\ps}{\ptp{s}}

\newcommand{\pw}{\ptp{w}}

\newcommand{\PS}{S}
\newcommand{\PQ}{Q}
\newcommand{\PR}{R}

\newcommand{\SLTS}[1]{\xrightarrow{#1}}

\newcommand{\qed}{\hfill $\Box$}

\newcommand{\pP}[2]{\ensuremath{#1\text{\bf [}#2\text{\bf ]}}}
 \newcommand{\parN}{\mathrel{\|}}
  \newcommand{\parG}{\mathrel{\|}}
 
  \newcommand{\ty}[2]{#1\vdash #2}
 
 \newcommand{\NamedCoRule}[5][]{\IInfer[#1]{#2}{ #3 }{#4}{#5}} 
\newcommand {\IInfer} [5] [] {
  \inferrule*[%
    fraction={===}, 
    left={\textsc{#2}},%
    right={$\begin{array}{l} #5 \end{array}$}, 
    #1
  ]%
   {#3}{#4}}
  \newcommand{\NamedRule}[5][]{\Infer[#1]{#2}{ #3 }{#4}{#5}} 
\newcommand {\Infer} [5] [] {
  \inferrule*[%
    left={\textsc{#2}},%
    right={$\begin{array}{l} #5 \end{array}$}, 
    #1
  ]%
   {#3}{#4}}

   \newcommand{\Nt}{{\mathbb{N}}}
   \newcommand{\Nh}{{\mathbb{M}}}

   \newcommand{\plays}[1]{\ensuremath{{\sf plays}(#1)}}
   \newcommand{\players}[1]{\plays{#1}}

     \newcommand{\GG}{{\mathcal G}}
     
     \newcommand{\mypath}{\sigma}

  \def\finex{{\unskip\nobreak\hfil
\penalty50\hskip1em\null\nobreak\hfil{\Large $\diamond$}
\parfillskip=0pt\finalhyphendemerits=0\endgraf}}

\newcommand{\stackred}[1]{\xrightarrow{#1}}
\newcommand{\concat}[2]{\ensuremath{#1\,{\cdot}\,#2}}

\newcommand{\LL}[1]{{\mathcal L}(#1)}
\newcommand{\LLp}[2]{{\mathcal L}_{#1}(#2)}
\newcommand{\cp}[1]{{\mathsf cap}(#1)}

\newcommand{\cml}{\Lambda}

\newcommand{\mes}[3]{{\langle}#1{,}#2{,}#3{\rangle}}
\newcommand{\Msg}{\mathcal{Q}}

\newcommand{\Cline}[1] {\vspace{0.73mm}\centerline{$ #1 $}\vspace{0.73mm}}
\begin{document}

\maketitle

\begin{abstract}
%
We present an asynchronous calculus for multiparty sessions with mixed choice, which extends the Simple MultiParty Session framework in order to support 
nondeterministic choices with both
input and output prefixes. 
Global types -- equipped with a coinductively defined labelled transition system -- form the basis of a type system that exploits the key notion of coherence of communication 
 label sets. Roughly, a coherent set contains either all the communications 
enabled in the session, or all the actions currently exhibited by a participant,  provided that 
at least one input is enabled for each expected sender. 
Our approach to the typing of multiparty sessions is orthogonal to the well-established, projection-based approaches of the MultiParty Session Type framework.
We prove fundamental theorems for typable multiparty sessions, including Subject Reduction and Session Fidelity. These properties imply that typable sessions are both Lock-Free and Orphan-Message-Free. We also investigate the Eventual Reception property for an extension of the type system. 
 Some  examples demonstrate the expressiveness of mixed choice in asynchronous multiparty protocols and the effectiveness of the proposed type discipline.
\end{abstract}

{\bf Keywords}: Asynchronous Message Passing, Multiparty Session Types.

\section{Introduction}

We present an asynchronous calculus for multiparty sessions with mixed choice, 
 reformulating  
the Mixed Choice Multiparty Session (MCMP) calculus,
 as defined in~\cite{PY24}, 
in the Simple MultiParty Session (SMPS) framework introduced in~\cite{DGD22,BDL23}
 and first used under this name in \cite{BDL24}.  Notably, communication in MCMP is synchronous, 
whereas in our calculus it is asynchronous, providing a more accurate representation of actual scenarios involving distributed participants. 
The present calculus supports 
nondeterministic choices  containing both input and output prefixes. 
A typical simple   interaction exploiting the expressiveness of mixed choice is the following one between  a client and a server. 
 Client $\pc$ keeps on sending task requests ($\msg{req}$) to the server $\ps$, to which $\ps$ 
replies with corresponding results ($\msg{res}$). At any time $\ps$ can decide to terminate the interaction with $\pc$ by sending a $\msg{halt}$ \tg, after which both parties stop. So $\ps$ can decide either to receive a $\msg{req}$ or to send a $\msg{halt}$. 
 Without the option of mixed choice, $\ps$ would be forced to inform $\pc$ of their willingness to proceed with a request/result round at the cost of an additional interaction.

 We equip our calculus  with  a type system grounded on  
the key notion of coherence of communication label 
sets. Roughly, a coherent set contains either all the communications
enabled in the session, or all the actions currently exhibited by a participant,  provided that 
at least one input is enabled for each expected sender. Our approach to the typing of multiparty sessions  -- similarly to recent works such as \cite{CFJ26} --  is orthogonal to the well-established, projection-based approaches of the MultiParty Session Type framework~\cite{HYC08,Honda2016}. 
As in~\cite{DGD22,BDL23,BD24}, 
 we do not distinguish between the local-type and process layers, instead considering a single layer of abstract processes. 
Without resorting to the MPST notion of projection, this enables sessions (i.e. parallel compositions of named processes, our participants) to be directly typed by global types, which are equipped with a coinductively defined labelled transition system.
Fundamental theorems for typable multiparty sessions hold. 
Subject Reduction and Session Fidelity show an isomorphism between the reductions of sessions and 
 those  of the 
global types which type  sessions  
in parallel with their queues. These properties imply that typable sessions are both Lock-Free and Orphan-Message-Free. We also investigate the Eventual Reception property for a restriction 
of the type system. 
 Three examples  
demonstrate the expressiveness of mixed choice in asynchronous multiparty protocols and the effectiveness of the proposed type discipline. 
Full proofs and a further example can be found in \cite{BD26full}.

\section{The Asynchronous SMPS Calculus with Mixed Choice}

We  present now  the SMPS asynchronous calculus of multiparty sessions with  mixed choice, 
inspired  mainly 
by~\cite{PY24} and partially by~\cite{BDL23}. 
We assume to have the following denumerable base sets:  \emph{\tg s}  (ranged
over by $\msg{\lambda},\msg{\lambda'},\dots$); \emph{session participants} (ranged over
by  $\pp,\q,\pr, \ps, \ldots$); \emph{indexes} (ranged over by  $i, j, h, k,\dots$);
\emph{ finite sets of indexes} (ranged over by $I, J, H, K, \dots$).

Processes, ranged over by $\PP,\Q,\PR,\PS,\dots$, 
implement the behaviour of participants. They are sums of processes prefixed by actions.
In the following and in later definitions the symbol $\coDef$ expresses  
that the
productions have to be interpreted \emph{coinductively}  and that only \emph{regular} terms are allowed.  
    Accordingly,  we can adopt in proofs the coinduction style
advocated in \cite{KozenS17} which, without any loss of formal rigour,
 promotes readability and conciseness.

\begin{definition}[Processes]\label{p} 
 \begin{enumerate}[i)]
 \item
 {\em  Action  prefixes} are defined by\quad $\pi\ ::=\ \pp?\msg{\lambda}\ \mid\  \pp!\msg{\lambda}$. 
  \item
  {\em Processes} are  
  defined by
  
  \Cline{\PP\coDef\inact\ \mid\ \Sigma_{i\in I}\pi_i.\PP_i}
  
  \noindent
where $I\neq\emptyset$ and finite, and 
  $\pi_{ l } 
  = \q? \msg{\lambda_{ l }}
  $, $\pi_j= \q? \msg{\lambda_j}$ (resp. $\pi_{ l }
  = \q! \msg{\lambda_{ l }}
  $, $\pi_j= \q! \msg{\lambda_j}$) imply 
 $\msg{\lambda_l}\neq\msg{\lambda_j}$, for any $ l,j  
 \in I$ such that $ l\neq j $. 
 \end{enumerate}
\end{definition}

\noindent
 In the above definition, $\Sigma_{i\in I}\pi_i.\PP_i $ stands, as usual, for the summand of
processes $\pi_i.\PP_i$'s.
A $\Sigma_{i\in I}\pi_i.\PP_i $ process represents  the 
nondeterministic choice of one of the actions $\pi_i$, after which the process
continues as $\PP_i$ with $i\in I$. As usual, we assume the summand of processes to be commutative
and associative. 
A prefix $\pi$ can be any {\em input} (i.e. of the form $\pp?\msg{\lambda}$)
or {\em output} (i.e. of the form $\pp!\msg{\lambda}$) action.  
We use $\inact$ to denote the terminated process. 
For the sake of readability, we omit trailing $\inact$'s in processes.
 The condition on  prefixes  
  enforces  pairwise  distinct  prefixes in summands.  

\noindent
Processes are essentially terms of the calculus MCMP (Mixed Choice Multiparty Sessions) 
 as defined in~\cite{PY24}, the only difference being the absence of communicated values and if-then-else processes.  In the SMPS framework, the latter can be abstractly represented by means of the nondeterministic choice.
 
\smallskip
 {\bf Notation}:  We use 
$\pi.\PP \pplus \Q$   as short for either $\pi.\PP + \Q$ or $\pi.\PP$.
 Such a notation enables us to present the  operational semantics  
 in a compact and yet formal way, since
in our processes -- as in~\cite{PY24} -- we cannot have unprefixed $\inact$'s as summands. 
\smallskip

We use queues in order  to formalise a one-to-one asynchronous model of communication.
Instead of explicitly  defining  a queue for each possible sender and receiver, we use a 
single queue and equip the communicated  \tg s 
with their sender and receiver names, so forming
triples that we dub  {\em \ms s}.

\begin{definition}[\Ms s and Queues]
\begin{enumerate}[i)]
\item
 \emph{\Ms s} are triples of the form $\mes\pp{\msg{\lambda}}\q$. 
\item
{\em \Ms\ queues} (queues for short) are defined by: \quad
$\Msg::=\emptyset \mid \mes\pp{\msg{\lambda}}\q\cdot\Msg$.
\end{enumerate}
  \end{definition}
 A \ms\ $\mes\pp{\msg{\lambda}}\q$ denotes that the participant $\pp$ is the sender of  the \tg\ $\msg{\lambda}$  to  the receiver $\q$. 
 Sent \ms s are stored in a queue,  from which  they are subsequently fetched by the receivers.
The order of \ms s in a queue  determines  the order in which they will be read. 
Since order only matters between \ms s with
the same sender and receiver, we  always  consider \ms\ queues modulo the  following  structural equivalence:

\Cline{\Msg\cdot{\mes\pp{\msg{\lambda}}\q}\cdot{{\mes\pr{\msg{\lambda'}}\ps}\cdot{\Msg'}}\equiv
 {\Msg}\cdot{\mes\pr{\msg{\lambda'}}\ps}\cdot{\mes\pp{\msg{\lambda}}\q}\cdot{\Msg'}
  \quad \text{ if}~~\pp\not=\pr~~\text{or}~~\q\not=\ps 
}

\noindent
Note, in particular, that
$\mes\pp{\msg{\lambda}}\q\cdot\mes\q{\msg{\lambda'}}\pp \equiv
\mes\q{\msg{\lambda'}}\pp\cdot\mes\pp{\msg{\lambda}}\q$. 
These two equivalent queues represent a situation in which  participants $\pp$ and $\q$ have sent each other a \tg, but neither has read theirs yet.  
This situation  is indeed possible in concurrent systems with asynchronous communication. 


\smallskip
Multiparty sessions are networks equipped with a queue, where a network is
a parallel composition of  participant-process pairs 
 of the form $\pP{\pp}{\PP}$.
 
\begin{definition}[Networks and Multiparty Sessions] 
\begin{enumerate}[i)]
\item
{\em Networks} are defined by

\Cline{\Nt = \pP{\pp_1}{\PP_1} \parN \cdots \parN \pP{\pp_n}{\PP_n} }

\noindent
where $\pp_j \neq \pp_l $ for $1\leq j,l\leq n$ and $j\neq l$;
\item 
{\em Multiparty sessions} are defined by \quad$\Nh::=\Nt\parN \Msg $.
\end{enumerate}
\end{definition}
\noindent
  We assume the standard structural congruence  $\equiv$  on networks, stating that
parallel composition is commutative and associative and has neutral elements 
$\pP\pp\inact$ for any $\pp$.
We write $\pP\pp\PP\in\Nt$ if $\Nt\equiv\pP\pp\PP\parN\Nt'$ and $\PP\neq\inact$. Moreover, we define $\plays\Nt=\set{\pp\mid \pP\pp\PP\in\Nt}$ and $\plays{\Nt\parN \Msg}=\plays\Nt$.
 A network $\Nt$ is {\em final} whenever $\plays\Nt=\emptyset$.  

\smallskip
 To define the {\em  asynchronous operational semantics} of sessions we use an LTS,
whose transitions are decorated by {\em communication labels}, i.e.  output labels 
of the shape 
$\pp\q!\msg{\lambda}$ and input labels of the shape $\pp\pq?\msg{\lambda}$, ranged over by $\cml$, $\cml',\ldots$. We say that the 
output label $\pp\q!\msg{\lambda}$ corresponds to the input label $\pq\pp?\msg{\lambda}$. 

We denote as follows the   push/pull  
of a \ms\ in/from a queue.   
 \begin{definition}
 \label{def:pushpull}
 We define $\pp\q!\msg{\lambda}(\Msg)=\Msg\cdot\mes \pp {\msg{\lambda}} \pq$ and $\pp\q?\msg{\lambda}(\mes \pq {\msg{\lambda}} \pp\cdot\Msg)=\Msg$.  
 \end{definition}
 Notice that $\cml(\Msg)$ is undefined only when $\cml=\pp\q?\msg{\lambda}$ and
 $\Msg\not\equiv\mes \pq {\msg{\lambda}} \pp\cdot\Msg'$ for  any 
  $\Msg'$.  

\begin{definition}[LTS for Multiparty Sessions]\label{slts}
The {\em   labelled transition   system   (LTS) 
for multiparty sessions}  
  is the closure under structural congruence of the reduction specified by the axioms: 
  
  \Cline{\begin{array}[c]{@{}c@{}}
      \NamedRule{\rn{Out}}{  
      }{
      \pP\pp  {\q!\msg{\lambda}.\PP \pplus \PP' }
      \parN \Nt\parN \Msg\\
     \SLTS{\pp\q!\msg{\lambda}}\qquad
      \pP\pp{\PP}\parN\Nt\parN \pp\q!\msg{\lambda}(\Msg)
      }{}\\[1mm]
       \NamedRule{\rn{In}}{  
      }{
      \pP\pp  {\q?\msg{\lambda}.\PP \pplus \PP' }
      \parN \Nt\parN \Msg\\
     \SLTS{\pp\q?\msg{\lambda}}\qquad
      \pP\pp{\PP}\parN\Nt\parN \pp\q?\msg{\lambda}(\Msg)
      }{} 
    \end{array}
 }
\end{definition}

\noindent
Notice that the implicit condition on  Axiom  
\rn{In} to be applied is ``$\pp\q?\msg(\Msg)$ is defined''.

\begin{example}[Client/Server]\label{ex1}
 {\em The client/server system  
described in the  introduction 
can 
be implemented by

 \Cline{\pP\pc\PP\parN\pP\ps\PQ\parN\emptyset} 
 
 \noindent where $\PP=\ps!\msg{req}.(\ps?\msg{res}.\PP+\ps?\msg{halt}.\ps?\msg{res})$ and $\PQ=\pc?\msg{req}.\pc!\msg{res}.\PQ+\pc!\msg{halt}.\pc?\msg{req}.\pc!\msg{res}$.}\finex
\end{example}

As usual we define (finite) {\em traces} as  sequences of communication labels:
$\sigma :=\epsilon\mid \cml\cdot\sigma$.
When $\sigma ={\cml_1}\cdot\ldots\cdot{\cml_n}$ ($n\geq 1)$
we write $\Nt\parN\Msg\stackred{\sigma}\Nt'\parN\Msg'$ as short for $\Nt\parN\Msg\stackred{\cml_1}\Nt_1\parN\Msg_1\cdots\stackred{\cml_n}\Nt_{n}\parN\Msg_{n} 
 =  \Nt'\parN\Msg'$. We also write $\Nt\parN\Msg\stackred{\sigma}$ whenever there exists $\Nt'\parN\Msg'$ such that $\Nt\parN\Msg\stackred{\sigma}\Nt'\parN\Msg'$.

 We  conclude  
 this section by defining Lock Freedom following~\cite{Padovani14}
and Orphan-message Freedom following~\cite{DY12,DY13}.
Roughly, a session is Lock Free when there is always a continuation enabling a participant to communicate whenever it is willing to do so.
Lock Freedom entails Deadlock Freedom, since it ensures progress for each participant.
Orphan-message Freedom guarantees that in any reachable session if no participant is active, then no message is left in the queue. 

\begin{definition}[Lock Freedom]\label{d:lf}
  A session  $\Nh$ is \emph{lock free} if\/  $\Nh\SLTS{\mypath}\Nh'$ and
  $\pp\in\plays{\Nh'}$ imply 
	 $\Nh'\SLTS{\concat{\mypath'}\Lambda}$ 
	 for some $\mypath'$ and $\Lambda$ such that $\pp\in\plays\Lambda$.
\end{definition}

\begin{definition}[Orphan-message Freedom]\label{d:omf}
 A session  $\Nh$ is \emph{orphan-message free} if\/  $\Nh\SLTS{\mypath}\Nt'\parN\Msg'$ and 
 $\plays{ \Nt' } 
 =\emptyset$ imply $\Msg'=\emptyset$.
\end{definition}
Orphan-message Freedom is a property usually investigated in the setting of
Communicating Finite State Machines (CFSM)~\cite{bz83,cf05}, a classical formalism for protocol specification and verification. 
As a matter of fact, our  sessions  
can be shown to be equivalent to
systems of CFSMs\footnote{\label{fn} The differences are just in the way we formalise communication notions.
We use regular terms as processes instead of automata. We use a single queue, instead of 
a number of directed channels, one for each pair of distinct participants. Moreover,
in CFSM, the input action $\pp\pq?\msg{\lambda}$
is interpreted as the action of $\pq$ reading the tag $\msg{\lambda}$ from the directed channel $\pp\pq$ (i.e.  the channel containing messages with sender $\pp$ and receiver $\pq$). In SMPS, instead, $\pp\pq?\msg{\lambda}$ denotes the action of $\pp$ reading in the queue the tag $\msg{\lambda}$ sent by  $\pq$ to $\pp$.  
 The  interpretation of  outputs  in the two formalisms is the same.  
 In fact, also   in CFSM $\pp\pq!\msg{\lambda}$ represents the insertion of  the tag 
$\msg{\lambda}$   
 in  the   channel $\pp\pq$   containing the messages from  
 $\pp$ to $\pq$.
 }. 
Orphan-message Freedom is weaker than Eventual Reception, the property guaranteeing
than any message pushed in the queue can be eventually pulled from it. We deal with such a 
property in Section~\ref{pr}.

\section{Type Assignment System}

 Global types represent   the overall behaviour of multiparty sessions  by means of communication labels. 
Here we use a notion of global type  that is  similar to the one in  MPST but, following SMPS,
we define them coinductively as possibly infinite regular terms.

\begin{definition}[Global types]\label{def:gt}
{\em Global types} are  
 defined by:

\Cline{\G\coDef\End\ \mid\ \Sigma_{i\in I}\cml_i.\G_i
         }

\noindent
where $I\neq\emptyset$ and finite, and for any $j, l\in I$ such that $j\neq l$, 
$\cml_j= \pp\q!\msg{\lambda_j}$ and $\cml_l= \pp\q!\msg{\lambda_l}$ imply $\msg{\lambda_j}\neq\msg{\lambda_l}$. 
 Likewise  for $\cml_j= \pp\q?\msg{\lambda_j}$ and $\cml_l= \pp\q?\msg{\lambda_l}$. 
\end{definition} 
As usual, trailing $\End$'s will be omitted. 

\medskip

 The {\em players of
   communication labels} are the senders for the outputs and the receivers for the inputs, i.e. we define
   
\Cline{
 \players{\pp\pq!{\msg{\lambda}}}=
    \players{{\pp\pq}?{\msg{\lambda}}}=\set\pp}

\noindent
Moreover, we define
 $\players{\G}$ by 
 
 \Cline{\players{\End}=\emptyset\qquad\players{\cml.\G}=\players{\cml}\cup\players{\G}\qquad\players{\G_1+\G_2}=\plays{\G_1}\cup\players{\G_2}}
 
 \noindent
   By the regularity condition, $\players{\G}$ is finite for any $\G$.

The notion of coherent set of communication labels is crucial for typing sessions with global types.
 This  requires  defining when a participant is satisfied in a session. 
 Informally, $\pp$  is  
satisfied in  $\Nt\parN \Msg$ when there is at least one \ms\ on the queue  from each
of its required senders in the top choice.  
Then participants with top choices only between outputs are always satisfied.  

\begin{definition}[Participant Satisfaction] A participant $\pp$ is {\em satisfied} in  $\Nt\parN \Msg$ 
 whenever
 
 \Cline{
\pP\pp{\Sigma_{i\in I} \pq_i!\msg{\lambda_i}.\PP_i+\Sigma_{j\in J} \pq_j?\msg{\lambda_j}.\PP_j}\in\Nt \text{ and }L=\set{j\in J \mid \Msg\equiv \mes{\pq_j}{\msg{\lambda_j}}{\pp}\cdot \Msg_j}}

\noindent
 imply that 
for all $\pq_j$ with $j\in J$ there is $l\in L$ such that $\pq_j=\pq_l$. 
\end{definition} 

\noindent
 For example $\ps$ is satisfied in $\pP\pc{\ps?\msg{res}.\PP+\ps?\msg{halt}.\ps?\msg{res}}\parN\pP\ps{\pc?\msg{req}.\pc!\msg{res}.\PQ+\pc!\msg{halt}.\pc?\msg{req}.\pc!\msg{res}}\parN\mes\pc{\msg{req}}\ps$, where $\PP$ and $\PQ$ are defined in Example~\ref{ex1}.

  A coherent set of communications comprises either the communications of a satisfied participant, or  all the communications that participants can engage in.

\begin{definition}[Coherent  Set of Communication Labels]\label{cscl}
Let $\LL{\Nt\parN \Msg}= \set{\cml\mid  \Nt\parN \Msg  \SLTS\cml}$ 
 and  \linebreak $\LLp{\pp}{\Nt\parN \Msg}= \set{\cml\in\LL{\Nt\parN \Msg}\mid \plays{\cml}=\set\pp}$.  A set $\set{\cml_i}_{i\in I}$ is {\em coherent} for $\Nt\parN \Msg$ if  $I\neq\emptyset$ and 
\begin{enumerate}
\item\label{cscl1} either 
$\set{\cml_i}_{i\in I}=$ $\LLp{\pp}{\Nt\parN \Msg}$ for a $\pp$ satisfied in  $\Nt\parN \Msg$;
\item\label{cscl2} or
$\set{\cml_i}_{i\in I}=\LL{\Nt\parN \Msg}$.
\end{enumerate}
\end{definition}

\noindent 
 Let $\PP$ and $\PQ$ be defined as in Example~\ref{ex1}, then
the coherent sets of the session $\pP\pc\PP\parN\pP\ps\PQ\parN\emptyset$ are:
\begin{itemize}
\item $\set{\pc\ps!\msg{req}}$ by Item~\ref{cscl1}, since $\pc$ is satisfied in  $\pP\pc\PP\parN\pP\ps\PQ\parN\emptyset$;
\item $\set{\pc\ps!\msg{req}, \ps\pc!\msg{halt}}$ by Item~\ref{cscl2}.
\end{itemize}

\noindent

We propose a type system in which, when applying a rule in a derivation, a coherent set must be taken into account.
When typing a session, having a satisfied participant ensures that all other participants' communications are recorded in all branches of the inferred global type.
However, when this is not possible, as with interdependent communications, the second case in the definition of coherence must be considered. This means that all possible interactions in the session must be accounted for in the typing process.

\begin{definition}[Type System]\label{def:type-system}
The type system is defined by the following axiom and rule, where sessions are considered modulo structural congruence and the double lines recall that the rule has to be interpreted coinductively~\cite[Chapter 21]{pier02}:  

\smallskip

\Cline{\NamedRule{\rn{End}}{}{\ty\End{\pP\pp\inact}  \parN\emptyset }{}{}}

\medskip

\Cline{
	\NamedCoRule{\rn{TComm}}
{\mbox{$\begin{array}{c} \Nt\parN \Msg\SLTS{\cml_i}\Nt_i\parN \Msg_i\qquad  \ty{\G_i}{\Nt_i\parN \Msg_i}  
\qquad \forall i\in I 
\\
\set{\cml_i}_{i\in I}\text{ is coherent for } \Nt\parN \Msg \qquad\plays{\Sigma_{i\in I}\cml_i.\G_i}=\plays\Nt \\[3mm]
\end{array}$}}
{\ty{\Sigma_{i\in I}\cml_i.\G_i}{\Nt\parN \Msg}}
{ }{}
	}
\end{definition}

\noindent
 It can be checked that the  client/server  system  
 implemented in  Example \ref{ex1} 
 can be typed by  
 
 \Cline{\G=\pc\ps!\msg{req}.(\ps\pc?\msg{req}.\ps\pc!\msg{res}.\pc\ps?\msg{res}.\G+\ps\pc!\msg{halt}.\pc\ps?\msg{halt}.\ps\pc?\msg{req}.\ps\pc!\msg{res}.\pc\ps?\msg{res})
  }

\noindent
 by using only sets which are coherent thanks to 
  condition~\ref{cscl1}.  in  
  Definition~\ref{cscl}.
 
 Premises of  Rule \rn{TComm} can be expressed 
 in terms of the shape of the processes and queues in 
the session. The current formulation which uses the LTS of multiparty sessions is  simpler.

Unless the current session in a type derivation requires to take into account a coherent set containing the labels of all the possible reductions (for instance when participants can be obtained one from the other by means of renaming)
coherent sets enable to consider the reduction behaviour of a single participant at a time in type derivations, as in type systems for sessions without mixed choice.

 The coherence of $\set{\cml_i}_{i\in I}$ for $\Nt\parN \Msg$ 
 in Rule \rn{TComm}   is  
needed for Subject Reduction and Lock Freedom. In fact, without this condition, 
 some branches possibly leading to un-typeable sessions could be disregarded in a typing.
For example, we could derive 

\Cline{\ty{\pp\pq!\msg{\lambda}.\pq\pp?\msg{\lambda}.\pr\pp!\msg{\lambda}.\pp\pr?\msg{\lambda}}{\pP\pp{\pq!\msg{\lambda}.\pr?\msg{\lambda}+\pr?\msg{\lambda}.\pq!\msg{\lambda'}}\parN\pP\pq{\pp?\msg{\lambda}}\parN \pP\pr{\pp!\msg{\lambda}}\parN\emptyset}
}

\noindent
 However, the above session reduces to $\pP\pq{\pp?\msg{\lambda}}\parN\mes{\pp}{\msg{\lambda'}}{\pq}$,
which is stuck and cannot be typed,\footnote{
 Note that the same judgment can be derived using the type systems of~\cite{BD24,BDL23}, highlighting the challenges posed by mixed-choice in relation to existing approaches.
}  

The condition ``$\plays{\Sigma_{i\in I}\cml_i.\G_i}=\plays\Nt$'' 
is   also  necessary to get Lock Freedom. 
 Otherwise,  we could derive 
$\ty{\G}{\pP\pp\PP\parN\pP\pq\PQ\parN\pP\pr{\ps!\msg{\lambda}}\parN\emptyset}$ with $\PP=\pq!\msg{\lambda}.\PP$, $\PQ=\pp?\msg{\lambda}.\PQ$
and $\G=\pp\pq!\msg{\lambda}. \pq\pp?\msg{\lambda}.\G$.   
 In particular,
the  (infinite) derivation uses the satisfaction of $\pp$ when the queue is empty  
and the satisfaction of $\pq$ when the queue contains $\mes\pp{\msg\lambda}\pq$. 

When both conditions of Definition~\ref{cscl} are satisfied, we get more  compact  
types  by  using the first  one.   
For example,  by using the first condition when typing  
$\pP\pp{\pq?\msg{\lambda}}\parN\pP\pr{\ps?\msg{\lambda'}}\parN\mes{\pq}{\msg{\lambda}}{\pp}\cdot\mes{\ps}{\msg{\lambda'}}{\pr}$
we can  derive either $\pp\pq?\msg{\lambda}.\pr\ps?\msg{\lambda'}$ or $\pr\ps?\msg{\lambda'}.\pp\pq?\msg{\lambda}$. 
With the second condition we  can  only derive 
$\pp\pq?\msg{\lambda}.\pr\ps?\msg{\lambda'}+\pr\ps?\msg{\lambda'}.\pp\pq?\msg{\lambda}$.

\smallskip
We define  now  the semantics of global types  by means of a coinductive formal system, as    
first done in~\cite{BDL24} for synchronous communications without mixed choices. 
Such a coinductive definition enables us to consider global types that contain 
branches where some communication can be  postponed indefinitely. 
To achieve this, it is useful to associate 
a global type  with  the set of communication  labels that may (but not necessarily) decorate its transitions. 
 We call these the capabilities of the global type.

\begin{definition}[Capabilities]\label{cp}~~

\Cline{\cp{\End}=\emptyset\quad\cp{\cml.\G}=\set{\cml}\cup\cp{\G}\quad\cp{\G_1+\G_2}=\cp{\G_1}\cup\cp{\G_2}}   

\end{definition} 

As  usually done in SMPS with  
asynchronous communication, we  introduce an LTS on  
{\em type configurations} of the form $\G\parG\Msg$, i.e. 
parallel compositions of a global type and a queue.  We use 
$\cml.\G \pplus \G'$   as short for either $\cml.\G+\G'$ or $\cml.\G$. 

\begin{definition}[LTS for  Type Configurations]\label{ltsgt}
  The {\em labelled transition system (LTS) for type configurations} is
  specified by the following axiom and 
  rule: 

\Cline{\begin{array}{c}
\NamedRule{\rn{GE}}{}{(\cml.\G\pplus\G')\parG \Msg\SLTS{\cml}\G\parG \cml(\Msg)}{}{}
\\[2mm]
\rn{GI}\quad
\prooftree \G_i\parG \Msg\SLTS{\Lambda}\G'_i\parG \Lambda(\Msg)
\quad \players\Lambda\neq\players{\Lambda_i} 
\quad
\Lambda\in\cp{\G_i}  \quad \forall i\in I
\Justifies
\Sigma_{i\in I} \Lambda_i.\G_i\parG \Msg\SLTS{\Lambda}\Sigma_{i\in I} \Lambda_i.\G'_i\parG  
\Lambda(\Msg)
\endprooftree
\end{array}}
\end{definition}

\noindent
Notice that the implicit condition on Axiom \rn{GE}   and Rule \rn{GI}  to be applied is ``$\cml(\Msg)$ is defined''.

The condition $\cml\in\cp{\G_i}$ in Rule \rn{GI} is needed  because  of the coinductive nature of 
such a rule. 
 In fact, without such a condition,  
   we could get $\G\parN\emptyset\SLTS{\pp\pq!\msg{\lambda}}\G\parN\mes\pp{\msg\lambda}\pq$
  for $\G=\pr\ps!\msg{\lambda'}.\ps\pr?.\msg{\lambda'}.\G$ by means of the 
 following infinite derivation:

\Cline{
\mathcal{D}= \quad
                       \prooftree
                              \prooftree
                              \mathcal{D}
                              \Justifies
                              \ps\pr?.\msg{\lambda'}.\G\parN\emptyset  \SLTS{\pp\pq!\msg\lambda} \ps\pr?.\msg{\lambda'}.\G\parN\mes\pp{\msg\lambda}\pq
                              \using \rn{GI}
                              \endprooftree
                       \Justifies
                       \G\parN\emptyset\SLTS{\pp\pq!\msg\lambda}\G\parN\mes\pp{\msg\lambda}\pq
                       \using \rn{GI}
                       \endprooftree
            }

\bigskip
    
            We provide now an extension of \cite[Example 2]{MMSZ21}, where we use mixed choice in order
to manage the case when the server with two workers decides to interrupt the interactions. 
\begin{example}[Client Server Workers]
\label{ex:csw1w2}
{\em
A client $\pc$ can keep on sending task requests ($\msg{req}$) to a server $\ps$ until $\ps$ is not able to serve $\pc$ anymore. In the latter case $\ps$ sends a $\msg{halt}$ \tg\ to $\pc$ and both stop.
The server $\ps$, on receiving a task request from $\pc$, decides whether  $\pc$
can be better served by 
the worker $\pw_1$ or by the worker $\pw_2$ and then sends the task request accordingly to one of them.
Once the task is performed, the result ($\msg{res}$) is sent to $\pc$ by either $\pw_1$ or $\pw_2$.
When $\ps$ decides to stop, 
$\ps$ informs $\pw_1$ (via  the   \tg\ $\msg{last}$) when the request the latter is going to receive will be the last one sent by $\ps$ before stopping. Such information is also communicated by $\pw_1$ to $\pc$ by using $\msg{resL}$ instead of $\msg{res}$. 
 After the last request to $\pw_1$, participant ${\ps}$ sends also a \tg\ $\msg{halt}$ to $\pw_2$ in order to make the latter terminate. Processes implementing this protocol are:  
 
\Cline{
\PP_{\pc} = \ps!\msg{req}.(\pw_1?\msg{res}.\PP_{\pc} + \pw_2?\msg{res}.\PP_{\pc}+\pw_1?\msg{resL} )
}

\Cline{
\PP_{\ps} =  \pc?\msg{req}.( \pw_1!\msg{req}.\PP_{\ps} +
                                                                        \pw_2!\msg{req}.\PP_{\ps})+\pw_1!\msg{last}.\pc?\msg{req}.\pw_1!\msg{req}.\pw_2!\msg{halt} 
}

\Cline{
\PP_{\pw_1} =  \ps?\msg{req}.\pc!\msg{res}.\PP_{\pw_1} + \ps?\msg{last}.\ps?\msg{req}.\pc!\msg{resL}
\qquad\qquad
\PP_{\pw_2} =  \ps?\msg{req}.\pc!\msg{res}.\PP_{\pw_2} + \ps?\msg{halt}
}

\noindent
 The session $\pP\pc{\PP_\pc}\parN\pP\ps{\PP_\ps}\parN\pP{\pw_1}{\PP_{\pw_1}}\parN\pP{\pw_2}{\PP_{\pw_2}}\parN\emptyset$ can be typed by 

\Cline{\G=\pc\ps!\msg{req}.(\ps\pc?\msg{req}.(\ps\pw_1!\msg{req}.\pw_1\ps?\msg{req}.\pw_1\pc!\msg{res}.\pc\pw_1?\msg{res}.\G+
\ps\pw_2!\msg{req}.\pw_2\ps?\msg{req}.\pw_2\pc!\msg{res}.\pc\pw_2?\msg{res}.\G)+\G')
}

\noindent
where $\G'=\ps\pw_1!\msg{last}.\pw_1\ps?\msg{last}.\ps\pc?\msg{req}.\ps\pw_1!\msg{req}.\pw_1\ps?\msg{req}.\pw_1\pc!\msg{resL}.\pc\pw_1?\msg{resL}.\ps\pw_2!\msg{halt}. \pw_2\ps?\msg{halt} 
$.
} 
\finex
\end{example}

\smallskip
In the following we provide  another a simple example exploiting the expressiveness of the mixed choice.  Its implementation as a network can be typed also by global types that do  
not correspond to the tree of all the possible actions.

\begin{example}[Time Out]
{\em
Participant $\pp$ wishes to send a \tg\  $\msg{\tge}$ to $\ps$. The value carried by $\msg{\tge}$
is processed by the participants $\pq$ and $\pr$ and then sent by 
the latter to  $\pp$ with the \tg\ $\msg{v}$. In case $\pp$ does not receive the \tg\ 
from $\pr$ before a certain amount of time, $\pp$ sends the \tg\ $\msg{\tge}$ to $\ps$ using  
instead a default value. 
In such a case, when the \tg\ $\msg{v}$ is eventually received by $\pp$, is discarded.
From the point of view of communications, the previous behaviour is implemented by the 
session with the network

\Cline{
\Nt = \pP\pp{\pr?\msg{v}.\ps!\msg{\tge} + \ps!\msg{\tge}.\pr?\msg{v}} \parN
\pP\ps{\pp?\msg{\tge}} \parN
\pP\pr{\pq?\msg{v}.\pp!\msg{v}} \parN
\pP\pq{\pr!\msg{v}} 
}

\noindent
 and the empty queue. 
For this session, it is possible to get a derivation for

\Cline{\G\vdash \Nt \parN \emptyset}

\noindent
where 

 \Cline{\G = \pq\pr!\msg{v}.\pr\pq?\msg{v}.\pr\pp!\msg{v}.(\pp\pr?\msg{v}.\pp\ps!\msg{\tge}.\ps\pp?\msg{\tge}+\pp\ps!\msg{\tge}.\ps\pp?\msg{\tge}.\pp\pr?\msg{v})}
 
 \noindent
 or 
 
 \Cline{\G= \pp\ps!\msg{\tge}.\pq\pr!\msg{v}.\ps\pp?\msg{\tge}.\pr\pq?\msg{v}.\pr\pp!\msg{v}.\pp\pr?\msg{v}
+\pq\pr!\msg{v}.\pr\pq?\msg{v}.\pr\pp!\msg{v}.\pp\pr?\msg{v}.\pp\ps!\msg{\tge}.\ps\pp?\msg{\tge}
}
 
 \noindent
 or $\G$ is a type obtained from the given ones by permuting communication labels with different players, but preserving the order between an output label and the corresponding input label.
}
\finex
\end{example}

\section{Properties}\label{pr}

It is essential that session reductions preserve satisfaction of non-involved participants.

\begin{lemma}[Satisfaction Preservation]\label{sp}
If $\Nt\parN \Msg\SLTS{\Lambda}\Nt'\parN \Msg'$  with $\plays{\Lambda}\neq\set\pp$ and $\pp$ is satisfied in $\Nt\parN \Msg$, then $\pp$ is satisfied in $\Nt'\parN \Msg'$.
\end{lemma}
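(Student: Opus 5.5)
The argument is a direct case analysis on the axiom of Definition~\ref{slts} used to derive $\Nt\parN \Msg\SLTS{\Lambda}\Nt'\parN \Msg'$, with $\plays{\Lambda}=\set\pr$ and $\pr\neq\pp$. In both cases the process of $\pp$ does not change. Rules \rn{Out} and \rn{In} only rewrite the process of the player $\pr$, and the closure under structural congruence does not affect $\pp$. Hence if $\pP\pp{\Sigma_{i\in I} \pq_i!\msg{\lambda_i}.\PP_i+\Sigma_{j\in J} \pq_j?\msg{\lambda_j}.\PP_j}\in\Nt'$, the same term occurs in $\Nt$. (If $\pp\notin\plays{\Nt'}$, then $\pp\notin\plays\Nt$ too, and satisfaction holds vacuously.) So the only thing to check is that every sender $\pq_j$ that had a matching message at the head of its channel in $\Msg$ still has one in $\Msg'$. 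Write $L$ and $L'$ for the index sets of Participant Satisfaction computed with respect to $\Msg$ and $\Msg'$. It suffices to show $L\subseteq L'$: for every $\pq_j$ with $j\in J$ there is $l\in L\subseteq L'$ with $\pq_l=\pq_j$.

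\emph{Case \rn{Out}}, with $\Lambda=\pr\ps!\msg{\lambda}$ and $\Msg'=\Msg\cdot\mes\pr{\msg\lambda}\ps$. Let $l\in L$, so $\Msg\equiv\mes{\pq_l}{\msg{\lambda_l}}{\pp}\cdot\Msg_l$. Then $\Msg'\equiv\mes{\pq_l}{\msg{\lambda_l}}{\pp}\cdot\Msg_l\cdot\mes\pr{\msg\lambda}\ps$, so $l\in L'$. This case needs no use of $\pr\neq\pp$, since appending at the tail never disturbs a head.

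\emph{Case \rn{In}}, with $\Lambda=\pr\ps?\msg{\lambda}$, $\Msg\equiv\mes\ps{\msg\lambda}\pr\cdot\Msg'$ and $\pr\neq\pp$. Let $l\in L$, so $\Msg\equiv\mes{\pq_l}{\msg{\lambda_l}}{\pp}\cdot\Msg_l$. The two extracted messages have different receivers, $\pp$ and $\pr$, so they lie in different sender/receiver channels. This is the only delicate step. Using the structural equivalence on queues, I will show that both messages can be moved to the front together:
\[\Msg\equiv\mes{\pq_l}{\msg{\lambda_l}}{\pp}\cdot\mes\ps{\msg\lambda}\pr\cdot\Msg''.\]
Then $\Msg'\equiv\mes{\pq_l}{\msg{\lambda_l}}{\pp}\cdot\Msg''$, which gives $l\in L'$.

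The main obstacle is justifying this rearrangement. I would argue via the standard characterisation of $\equiv$: two queues are equivalent iff, for every ordered pair (sender, receiver), their projections onto that pair coincide. The message $\mes{\pq_l}{\msg{\lambda_l}}{\pp}$ is the first message of the $(\pq_l,\pp)$ channel of $\Msg$, and $\mes\ps{\msg\lambda}\pr$ is the first of the $(\ps,\pr)$ channel. Removing the latter leaves every other projection unchanged, in particular the $(\pq_l,\pp)$ one, because $\pr\neq\pp$. Hence $\mes{\pq_l}{\msg{\lambda_l}}{\pp}$ is still the head of its channel in $\Msg'$, and can be brought to the front by swaps that are allowed because the swapped messages have a different sender or receiver. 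This concludes the case, and the lemma.
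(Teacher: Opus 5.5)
Your proof is correct and takes essentially the natural route for this lemma; the paper defers its proof to the full version, so the comparison rests on that. The case split on \rn{Out}/\rn{In} shows that the process of $\pp$ is untouched because $\pr\neq\pp$, and that the set $L$ can only grow: an output appends at the tail of the queue, while an input by $\pr\neq\pp$ removes the head of a channel $(\ps,\pr)$ different from every $(\pq_j,\pp)$. Your per-channel projection argument for the \rn{In} case is sound, and only its easy direction is needed (swaps preserve each channel's projection). After that, the head of channel $(\pq_l,\pp)$ in $\Msg'$ can be swapped to the front directly.
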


 If a global type can type a session, the players of the global type and of the network 
coincide: this can easily be shown by coinduction on type derivations.

\begin{lemma}[Players]
\label{l:p1}
If $\ty\G{\Nt\parN \Msg}$, then  $\players\G=\players\Nt$. 
\end{lemma}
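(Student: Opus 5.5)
The plan is to argue by case analysis on the last rule of the (possibly infinite) derivation of $\ty\G{\Nt\parN\Msg}$. Coinduction should not really be needed, because the side condition of Rule \rn{TComm} already states the desired equality. Since every node of a coinductive derivation is the conclusion of either Axiom \rn{End} or Rule \rn{TComm}, it is enough to look at the root.

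\emph{Case \rn{End}.} Here $\G=\End$ and the session is $\pP\pp\inact\parN\emptyset$ up to structural congruence. By definition $\players\End=\emptyset$. Since $\pP\pp\inact$ is the neutral element of parallel composition and $\pP\pp\PP\in\Nt$ requires $\PP\neq\inact$, we also have $\plays{\pP\pp\inact}=\emptyset$. Hence the two sets coincide.

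\emph{Case \rn{TComm}.} Here $\G=\Sigma_{i\in I}\cml_i.\G_i$, and one of the premises is exactly $\plays{\Sigma_{i\in I}\cml_i.\G_i}=\plays\Nt$. This is the statement to prove, so nothing else is needed.

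The only point needing care, and the one I expect to be the (mild) obstacle, is that sessions are typed modulo structural congruence. We must therefore check that $\plays\Nt$ is invariant under $\equiv$. This holds because $\equiv$ only applies commutativity, associativity and addition or removal of $\pP\pq\inact$ components. The membership $\pP\pq\PP\in\Nt$ with $\PP\neq\inact$ is insensitive to all three. Likewise, $\players\G$ is a well-defined finite set for the regular term $\G$, as remarked after Definition~\ref{def:gt}, so the equality in the premise of \rn{TComm} is meaningful.

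If one prefers the coinductive style of~\cite{KozenS17} that the paper mentions, the same argument can be phrased as unfolding the root of the derivation once and reading off the premise. The premise $\ty{\G_i}{\Nt_i\parN\Msg_i}$ is never used.
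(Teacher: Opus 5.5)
Your proof is correct and is essentially the paper's argument: the paper's ``coinduction on type derivations'' amounts to inspecting the last rule. There \rn{End} gives $\emptyset=\emptyset$, and \rn{TComm} already contains the required equality as a side condition, so the coinductive hypothesis on the premises is never used. Your check that $\plays\Nt$ is invariant under structural congruence is the one detail worth verifying, and you handle it correctly.
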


Subject Reduction ensures that  session reductions can be   
mimicked by the reductions of the type configurations with the global types which type them and the same queues. The proof  can be carried on by coinduction on the type derivation and by cases on the last applied axiom/rule.

\begin{theorem}[Subject Reduction]\label{sr}
If $\ty\G\Nt\parN \Msg$ and $\Nt\parN \Msg\SLTS{\Lambda}\Nt'\parN \cml(\Msg)$, then $\ty{\G'}{\Nt'\parN   \cml(\Msg)}$ and $\G\parG \Msg\SLTS{\Lambda}\G'\parG \cml(\Msg)$ for some $\G'$.
\end{theorem}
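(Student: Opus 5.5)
The proof proceeds by coinduction, simultaneously building the typing derivation for the reduct and the derivation of the transition $\G\parG\Msg\SLTS{\Lambda}\G'\parG\cml(\Msg)$ (itself coinductive, via \rn{GE}/\rn{GI}). Since $\Nt\parN\Msg$ reduces, $\Nt$ is not final, so the last rule of $\ty\G{\Nt\parN\Msg}$ is \rn{TComm}. Hence $\G=\Sigma_{i\in I}\cml_i.\G_i$, with $\Nt\parN\Msg\SLTS{\cml_i}\Nt_i\parN\Msg_i$ and $\ty{\G_i}{\Nt_i\parN\Msg_i}$ for all $i$, and $\set{\cml_i}_{i\in I}$ coherent for $\Nt\parN\Msg$. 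We then split on whether $\Lambda\in\set{\cml_i}_{i\in I}$.

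\emph{Case $\Lambda=\cml_k$ for some $k\in I$.} The session LTS is deterministic up to structural congruence: a label $\pp\q!\msg{\lambda}$ or $\pp\q?\msg{\lambda}$ selects a unique summand of $\pp$'s process, because prefixes in a sum are pairwise distinct. Therefore $\Nt'\parN\cml(\Msg)\equiv\Nt_k\parN\Msg_k$. We take $\G'=\G_k$, obtain $\ty{\G_k}{\Nt'\parN\cml(\Msg)}$ from the premise, and get the type transition by \rn{GE}.

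\emph{Case $\Lambda\notin\set{\cml_i}_{i\in I}$.} Since $\Lambda\in\LL{\Nt\parN\Msg}$, coherence cannot hold by Item~\ref{cscl2}. So $\set{\cml_i}_{i\in I}=\LLp{\pq}{\Nt\parN\Msg}$ for some $\pq$ satisfied in $\Nt\parN\Msg$, and $\players\Lambda=\set\pp$ with $\pp\neq\pq$, so $\players\Lambda\neq\players{\cml_i}$ for all $i$. The key step is a \emph{diamond property}: $\Lambda$ and each $\cml_i$ are actions of distinct participants, so they commute. That is, $\Nt_i\parN\Msg_i\SLTS{\Lambda}\Nt'_i\parN\Lambda(\Msg_i)$, and $\Nt'\parN\cml(\Msg)\SLTS{\cml_i}\Nt'_i\parN\Lambda(\Msg_i)$. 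This holds because pushing or pulling a message of one participant does not disturb an enabled pull of another participant, thanks to the structural equivalence on queues (pulls concern messages addressed to different receivers, or an output appends at the end).
\begin{itemize}
\item By the coinductive hypothesis applied to $\ty{\G_i}{\Nt_i\parN\Msg_i}$ and this $\Lambda$-step, we get $\G'_i$ with $\ty{\G'_i}{\Nt'_i\parN\Lambda(\Msg_i)}$ and $\G_i\parG\Msg_i\SLTS{\Lambda}\G'_i\parG\Lambda(\Msg_i)$.
\item Set $\G'=\Sigma_{i\in I}\cml_i.\G'_i$. To type $\Nt'\parN\cml(\Msg)$ by \rn{TComm}, we need:
\begin{itemize}
\item the transitions, from the diamond;
\item the player condition, via Lemma~\ref{l:p1};
\item coherence of $\set{\cml_i}_{i\in I}$ for $\Nt'\parN\cml(\Msg)$. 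By Lemma~\ref{sp}, $\pq$ is still satisfied, and $\LLp{\pq}{\cdot}$ is unchanged. The process of $\pq$ is untouched, and the messages $\pq$ could read remain available, because $\Lambda$ only pushes, or pulls messages addressed to $\pp\neq\pq$.
\end{itemize}
\item For the type transition we apply \rn{GI}. Its side condition $\Lambda\in\cp{\G_i}$ follows from $\ty{\G_i}{\Nt_i\parN\Msg_i}$ and $\Nt_i\parN\Msg_i\SLTS\Lambda$. This needs an auxiliary lemma: if $\ty{\G}{\Nt\parN\Msg}$ and $\Nt\parN\Msg\SLTS\Lambda$ then $\Lambda\in\cp\G$, proved by the same case analysis, inductively on the finite depth at which $\Lambda$ is fired.
\end{itemize}

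\emph{Main obstacle.} I expect the main obstacle to be justifying $\Lambda\in\cp{\G_i}$: coinduction alone gives no well-foundedness. I would argue via Lemma~\ref{l:p1}. Since $\pp\in\players{\G_i}$, along each branch a label of $\pp$ eventually appears. At each \rn{TComm} step where $\Lambda$ is not chosen, $\pp$'s process and its enabled reads persist, by the same argument as above. Hence when $\pp$'s first label appears, it must be drawn from $\pp$'s currently enabled actions, which include $\Lambda$: in coherent sets of kind~\ref{cscl1} for $\pp$, all of $\pp$'s enabled labels are present, and in sets of kind~\ref{cscl2}, all enabled labels are present. This gives the induction on the finite distance to $\pp$'s first label. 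The diamond property and the preservation of the enabled-input sets are the remaining routine checks.
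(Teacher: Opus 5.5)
Your skeleton is the one the paper indicates: coinduction on the typing derivation, a case split on whether $\Lambda$ is a top label, commuting $\Lambda$ with the moves of the satisfied participant $\pq$, and closing with \rn{GE}/\rn{GI} plus a capability lemma. However, the coherence step does not go through as you justify it. You need $\set{\cml_i}_{i\in I}=\LLp{\pq}{\Nt'\parN\Lambda(\Msg)}$. Your argument, like your diamond, only gives $\LLp{\pq}{\Nt\parN\Msg}\subseteq\LLp{\pq}{\Nt'\parN\Lambda(\Msg)}$.

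The converse inclusion is where things can break. Suppose $\Lambda=\pp\pq!\msg{\lambda}$, $\pq$ has a summand $\pp?\msg{\lambda}$, and $\Msg$ holds no message from $\pp$ to $\pq$. Then the push enables a new label $\pq\pp?\msg{\lambda}$, the set is no longer coherent, and $\Sigma_{i\in I}\cml_i.\G'_i$ lacks that branch. Lemma~\ref{sp} does not exclude this, since $\pq$ may well be satisfied \emph{after} the push. What excludes it is satisfaction of $\pq$ in the \emph{original} $\Nt\parN\Msg$. For every sender occurring in $\pq$'s inputs, the channel from that sender to $\pq$ already has at its head a message matched by some summand, so appending at the tail enables nothing new. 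An input by $\pp\neq\pq$ never touches channels into $\pq$. This is exactly where the satisfaction clause of coherence is used in Subject Reduction, and your proof has to invoke it.

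Your capability lemma also rests on a false claim: it is not true that a label of $\pp$ eventually appears along \emph{each} branch of $\G_i$. Take $\PS=\pq!\msg{a}.\PS+\pq!\msg{b}$ and $\PQ=\ps?\msg{a}.\PQ+\ps?\msg{b}$. The session $\pP\pp{\pr!\msg{\lambda}}\parN\pP\pr{\pp?\msg{\lambda}}\parN\pP\ps{\PS}\parN\pP\pq{\PQ}\parN\emptyset$ is typed, using only Item~\ref{cscl1}, by $\G=\ps\pq!\msg{a}.\pq\ps?\msg{a}.\G+\ps\pq!\msg{b}.\pq\ps?\msg{b}.\pp\pr!\msg{\lambda}.\pr\pp?\msg{\lambda}$. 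Its $\msg{a}$-loop never contains a label of $\pp$, although $\pp\pr!\msg{\lambda}$ is enabled throughout.

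You only need \emph{some} branch. Since $\pp\in\players{\G_i}$, a $\pp$-label occurs at finite depth, so induct on the length of a shortest path to one. No $\pp$-label is fired along that path, so $\Lambda$ stays enabled. At its end the coherent set contains a $\pp$-label, hence all enabled labels of $\pp$, hence $\Lambda$.

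Two smaller points. First, the premises of \rn{GI} carry the original queue $\Msg$, not $\Msg_i$, so the transition supplied by the coinductive hypothesis must be transported. This is harmless, since type transitions depend on the queue only through definedness of $\Lambda(\cdot)$. Second, deriving the player side condition by applying Lemma~\ref{l:p1} to judgments produced by the coinductive hypothesis is not a guarded use of that hypothesis. That condition should be checked directly on $\G'$.
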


Session Fidelity is a sort of reverse of Subject Reduction:  reductions of a  type configuration 
can be mimicked by reductions of the sessions typed by the global type. 
This can be proved  
by coinduction on the type derivation 
and by cases on the last applied axiom/rule. 

\begin{theorem}[Session Fidelity]\label{sf}
If $\ty\G\Nt\parN \Msg$ and $\G\parG \Msg\SLTS{\Lambda}\G'\parG \cml(\Msg)$, then $\ty{\G'}{\Nt'\parN \cml(\Msg)}$ and \linebreak $\Nt\parN \Msg\SLTS{\Lambda}\Nt'\parN \cml(\Msg)$ for some $\Nt'$.
\end{theorem}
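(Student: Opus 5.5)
Since the statement quantifies over all derivations of $\G\parG\Msg\SLTS{\Lambda}\G'\parG\cml(\Msg)$, which may be infinite because Rule \rn{GI} is coinductive, I would prove the typing half by coinduction and obtain the session transition by induction on the depth of the first occurrence of $\Lambda$. Concretely, I define the relation
$\mathcal R=\set{(\G',\Nt'\parN\cml(\Msg))\mid \ty\G{\Nt\parN\Msg},\ \G\parG\Msg\SLTS{\Lambda}\G'\parG\cml(\Msg),\ \Nt\parN\Msg\SLTS{\Lambda}\Nt'\parN\cml(\Msg)}$, enlarged with the pairs already derivable. I then show that every pair in it is the conclusion of an instance of \rn{End} or \rn{TComm} whose premises again lie in the relation. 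The statement is proved by cases on the last rule of the derivation of the type-configuration transition. Throughout, $\G=\Sigma_{i\in I}\cml_i.\G_i$ is typed by \rn{TComm}, so $\Nt\parN\Msg\SLTS{\cml_i}\Nt_i\parN\Msg_i$, $\ty{\G_i}{\Nt_i\parN\Msg_i}$, and $\set{\cml_i}_{i\in I}$ is coherent.

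\textbf{Case \rn{GE}.} Here $\Lambda=\cml_k$ and $\G'=\G_k$ for some $k\in I$. The premises of \rn{TComm} give directly $\Nt\parN\Msg\SLTS{\cml_k}\Nt_k\parN\cml_k(\Msg)$ and $\ty{\G_k}{\Nt_k\parN\cml_k(\Msg)}$, so $\Nt'=\Nt_k$ works.

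\textbf{Case \rn{GI}.} Here $\G'=\Sigma_{i\in I}\cml_i.\G'_i$, and for every $i$ we have $\G_i\parG\cml_i(\Msg)\SLTS{\Lambda}\G'_i\parG\Lambda(\cml_i(\Msg))$, $\players\Lambda\neq\players{\cml_i}$ and $\Lambda\in\cp{\G_i}$. The existence of the session step $\Nt\parN\Msg\SLTS{\Lambda}$ is the delicate part, because it is not immediate from the typing. I handle it by a separate \emph{induction} on the minimal depth at which $\Lambda$ occurs as a top label along the \rn{GI} derivation; the condition $\Lambda\in\cp{\G_i}$ is exactly what makes this depth finite. At that depth some \rn{GE} fires, so the session reached at that point can perform $\Lambda$. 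Moving backwards, each intermediate step $\cml_i$ has a player different from $\plays\Lambda$. Such a step does not alter the process of that player, and it can only append to the queue (outputs) or remove messages addressed to other receivers (inputs). Hence the $\Lambda$-transition is already enabled before $\cml_i$, and it commutes with it.

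This yields $\Nt\parN\Msg\SLTS{\Lambda}\Nt'\parN\Lambda(\Msg)$ and, for each $i$, a commuting square $\Nt'\parN\Lambda(\Msg)\SLTS{\cml_i}\Nt'_i\parN\cml_i(\Lambda(\Msg))$ together with $\Nt_i\parN\cml_i(\Msg)\SLTS{\Lambda}\Nt'_i\parN\cdots$. The queues agree up to $\equiv$, since the two labels have distinct players; for two inputs they read different receivers' messages, and for an output paired with an input the read message precedes the appended one. The pairs $(\G'_i,\Nt'_i\parN\ldots)$ then lie in $\mathcal R$, which provides the coinductive premises.

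It remains to verify the side conditions of \rn{TComm} for $\Sigma_{i}\cml_i.\G'_i$ against $\Nt'\parN\Lambda(\Msg)$, and I expect this to be the main obstacle, specifically the coherence condition. If the set was coherent via Item~\ref{cscl1} for a satisfied $\pp$ with $\plays\Lambda\neq\set\pp$, then Lemma~\ref{sp} keeps $\pp$ satisfied. I must also check that $\LLp\pp{\cdot}$ is unchanged, because $\pp$'s process is untouched and the messages to $\pp$ are neither removed by $\Lambda$ nor made unavailable. If instead the set was coherent via Item~\ref{cscl2}, then $\Lambda$ itself belongs to $\LL{\Nt\parN\Msg}=\set{\cml_i}$, so $\plays\Lambda=\plays{\cml_k}$ for some $k$, which contradicts the \rn{GI} side condition. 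That subcase therefore reduces to showing the $\LL{}$ sets stay equal, or to ruling it out altogether, and it needs careful argument. The condition on players follows from Lemma~\ref{l:p1} together with the fact that $\plays{\G'}=\plays\G$ when $\Lambda$'s player persists.
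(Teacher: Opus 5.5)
Your architecture matches the route the paper indicates: coinduction for the typing, a case split on \rn{GE}/\rn{GI}, Lemma~\ref{sp} for coherence, and a separate induction on the depth of a \rn{GE} leaf (finite because $\cml\in\cp{\G_i}$) to produce the session step. Your Item~\ref{cscl2} argument is already complete. $\cml\in\LL{\Nt\parN\Msg}=\set{\cml_i}_{i\in I}$ contradicts $\plays{\cml}\neq\plays{\cml_k}$, so in the \rn{GI} case coherence can only come from Item~\ref{cscl1}, and nothing more is needed there.

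However, several steps fail as you justify them. The backward step of your induction is wrong as stated: an output of another player can be exactly what enables an input $\cml$. For example, take $\cml_i=\pq\pp!\msg{\lambda}$ and $\cml=\pp\pq?\msg{\lambda}$ when $\Msg$ holds no message from $\pq$ to $\pp$. What you must invoke is the implicit side condition of \rn{GE}/\rn{GI} that $\cml(\Msg)$ is defined, which is part of the hypothesis. Combine it with the fact that $\cml_i$ leaves the process of the player of $\cml$ untouched.

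You have also misread \rn{GI}. In the paper its premise is $\G_i\parG\Msg\SLTS{\cml}\G'_i\parG\cml(\Msg)$, with the same queue $\Msg$, not $\cml_i(\Msg)$. Before putting $(\G'_i,\Nt'_i\parN\cdots)$ into $\mathcal R$ you must move this transition to the queue $\cml_i(\Msg)$ used in the typing. This is possible because the queue enters \rn{GE}/\rn{GI} only through definedness of $\cml(\cdot)$.

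In the Item~\ref{cscl1} case, your reason only shows that enabled inputs of $\pp$ stay enabled. Equality of $\LLp{\pp}{\cdot}$ also requires that an output $\cml=\pq\pp!\msg{\lambda'}$ does not enable a \emph{new} input of $\pp$. This is where satisfaction of $\pp$ in $\Nt\parN\Msg$ is used: every channel from a sender $\pp$ waits for is already nonempty, so appending does not change its head.

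The real gap is the side condition $\plays{\Sigma_{i\in I}\cml_i.\G'_i}=\plays{\Nt'}$. Lemma~\ref{l:p1} is available for $\G$ and for types occurring in the original typing derivation. It is not available for $\G'$ or the $\G'_i$, whose typability is exactly what the coinduction is establishing. So ``$\plays{\G'}=\plays\G$'' needs its own proof, since \rn{GE} leaves discard summands that may carry labels. Moreover it is false when the player $\pq$ of $\cml$ terminates; in that case you need $\pq\notin\plays{\G'}$, which your argument does not cover.

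A working argument goes as follows. Every label of $\G'$ is either a top label at a \rn{GI} node or a label of a continuation $\G''$ at a \rn{GE} leaf. A top label at a \rn{GI} node has its player in $\plays\G\setminus\plays{\cml}=\plays\Nt\setminus\plays{\cml}\subseteq\plays{\Nt'}$. A continuation $\G''$ at a \rn{GE} leaf is typed in the original derivation by a session reachable from $\Nt'$: move $\cml$ to the front of the branch by the diamond property. Hence Lemma~\ref{l:p1} applies to it and gives $\plays{\G''}\subseteq\plays{\Nt'}$. For the converse inclusion, follow a finite branch down to a \rn{GE} leaf. A player of $\Nt'$ is either the player of a label on that branch (kept as a prefix in $\G'$), or it has the same non-$\inact$ process in the session typed by $\G''$. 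In the latter case it lies in $\plays{\G''}$.
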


Toward establishing the property that typable sessions are lock free, the following lemma is handy. 
We  first  define $\plays{\epsilon}=\emptyset$ and $\plays{\Lambda\cdot\sigma}=\plays{\Lambda}\cup\plays{\sigma}$. 

\begin{lemma}[Reduction of Global Types] \label{lem:G-participants}
  If  $\ty\G{\Nt\parN\Msg}$ and  $\pp \in \plays{\G}$, then there are  $\sigma$,  $\cml$, $\G'$ and $\Msg'$ such that  
 
 \Cline{
 \G\parG \Msg  \SLTS{\;\sigma\cdot\cml\;}  \G'\parG \Msg'
 } 
 
 \noindent
 where $\pp \not\in\plays{\sigma} $   and $\pp \in \plays{\cml}$.
\end{lemma}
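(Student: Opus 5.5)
Since $\G$ is a regular term, $\players{\G}$ is the union of the players of the finitely many labels occurring in $\G$. So $\pp\in\plays{\G}$ means some branch of $\G$ reaches a label with player $\pp$ after finitely many steps. Define the \emph{depth} of $\pp$ in $\G$ as the minimal length $n$ of a path $\G=\Sigma_{i\in I}\cml_i.\G_i$, $\G_{i_1}$, \dots{} in the syntax tree such that the $n$-th label on the path has player $\pp$. This depth is finite because $\pp\in\plays\G$. I will prove the statement by induction on this depth, generalising over the session $\Nt\parN\Msg$ typed by $\G$.

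\emph{Base case} (depth $1$). Here $\G=\cml.\G_1\pplus\G''$ with $\pp\in\plays{\cml}$. Since $\ty\G{\Nt\parN\Msg}$, the last rule is \rn{TComm}, whose premise gives $\Nt\parN\Msg\SLTS{\cml}$. Hence $\cml(\Msg)$ is defined. Axiom \rn{GE} then yields $\G\parG\Msg\SLTS{\cml}\G_1\parG\cml(\Msg)$, and we take $\sigma=\epsilon$.

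\emph{Inductive step.} Let $\G=\Sigma_{i\in I}\cml_i.\G_i$ and pick $k\in I$ through which the minimal path passes, so that $\pp$ has smaller depth in $\G_k$ and $\pp\notin\plays{\cml_k}$. By \rn{TComm} we have $\Nt\parN\Msg\SLTS{\cml_k}\Nt_k\parN\Msg_k$ with $\ty{\G_k}{\Nt_k\parN\Msg_k}$. By \rn{GE}, $\G\parG\Msg\SLTS{\cml_k}\G_k\parG\Msg_k$ (note $\Msg_k=\cml_k(\Msg)$). The induction hypothesis applied to $\G_k$ and $\Nt_k\parN\Msg_k$ gives $\G_k\parG\Msg_k\SLTS{\sigma'\cdot\cml}\G'\parG\Msg'$ with $\pp\notin\plays{\sigma'}$ and $\pp\in\plays\cml$. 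Setting $\sigma=\cml_k\cdot\sigma'$ concludes, since $\pp\notin\plays{\cml_k}$.

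I expect the main obstacle to be the enabledness of the first label at each step, that is, ensuring that \rn{GE} can fire because $\cml(\Msg)$ is defined; the path must be chosen so that the typing is preserved along it. Both issues are resolved by always using the \rn{TComm} premise $\Nt\parN\Msg\SLTS{\cml_i}\Nt_i\parN\Msg_i$ together with $\ty{\G_i}{\Nt_i\parN\Msg_i}$, which keeps a typed session at every step. Alternatively, one could invoke Session Fidelity (Theorem~\ref{sf}) to transport typability along each \rn{GE} step, but reading off the premises of \rn{TComm} suffices. Rule \rn{GI} is never needed, since we only follow branches from the top. Note also that the minimal path cannot reach $\End$ before hitting $\pp$, because $\pp$ occurs on it.
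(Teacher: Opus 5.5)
Your proof is correct and follows essentially the same route as the paper's. It proceeds by induction on the finite depth of the first occurrence of $\pp$ in $\G$, and uses the premises of \rn{TComm} both to guarantee that $\cml(\Msg)$ is defined, so that \rn{GE} fires, and to keep each continuation typed for the induction hypothesis.
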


 We are now in a position to demonstrate that typable sessions are both lock and orphan-message free.

\begin{theorem}[Lock Freedom]
 If \hspace{1pt}$\Nh$ is typable, then $\Nh$ is lock free.
\end{theorem}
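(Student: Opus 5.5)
The argument chains Subject Reduction, the Players lemma, Lemma~\ref{lem:G-participants} and Session Fidelity. Let $\Nh=\Nt\parN\Msg$ be typable, say $\ty\G{\Nt\parN\Msg}$, and suppose $\Nh\SLTS{\mypath}\Nh'$ with $\pp\in\plays{\Nh'}$.

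\textbf{Step 1: transport typability along $\mypath$.} We first show that $\Nh'$ is typable, by induction on the length of $\mypath$. The case $\mypath=\epsilon$ is immediate. For $\mypath=\cml\cdot\mypath''$, the first transition has the form $\Nt\parN\Msg\SLTS{\cml}\Nt_1\parN\cml(\Msg)$. Theorem~\ref{sr} then gives a $\G_1$ with $\ty{\G_1}{\Nt_1\parN\cml(\Msg)}$, and the induction hypothesis applies to the remaining trace $\mypath''$. This yields $\Nh'=\Nt'\parN\Msg'$ together with a global type $\G'$ such that $\ty{\G'}{\Nt'\parN\Msg'}$.

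\textbf{Step 2: locate a future action of $\pp$ in the type.} Since $\pp\in\plays{\Nh'}=\plays{\Nt'}$, Lemma~\ref{l:p1} gives $\pp\in\plays{\G'}$. Lemma~\ref{lem:G-participants} then provides $\mypath'$, $\cml$, $\G''$ and $\Msg''$ such that
\[
\G'\parG\Msg'\SLTS{\mypath'\cdot\cml}\G''\parG\Msg''
\quad\text{with }\pp\in\plays{\cml}.
\]

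\textbf{Step 3: transfer the type trace to the session.} We now pull this trace back to $\Nh'$ using Theorem~\ref{sf}, by induction on the length of $\mypath'\cdot\cml$. At each step, suppose the current configuration is typed, say $\ty{\G_k}{\Nt_k\parN\Msg_k}$, and $\G_k\parG\Msg_k\SLTS{\cml_k}\G_{k+1}\parG\cml_k(\Msg_k)$. Theorem~\ref{sf} gives a session step $\Nt_k\parN\Msg_k\SLTS{\cml_k}\Nt_{k+1}\parN\cml_k(\Msg_k)$ with the same label, and it also gives $\ty{\G_{k+1}}{\Nt_{k+1}\parN\cml_k(\Msg_k)}$. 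Because the typing is re-established after each step, the induction can continue, and the queues of the type configuration and of the session stay equal throughout. Composing all the steps gives $\Nh'\SLTS{\concat{\mypath'}{\cml}}$ with $\pp\in\plays\cml$, which is exactly the requirement of Definition~\ref{d:lf}.

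\textbf{Where the difficulty lies.} No step is a real obstacle, since all the substantive content sits in Lemma~\ref{lem:G-participants} and in the two reduction theorems, which we may assume. The only point needing care is the synchronisation of queues in Step~3. Session Fidelity is stated for a transition $\G\parG\Msg\SLTS{\cml}\G'\parG\cml(\Msg)$ whose queue is literally the session's queue. In the type LTS of Definition~\ref{ltsgt} every transition updates the queue to $\cml(\Msg)$, so the hypothesis of Theorem~\ref{sf} holds at each step of the induction.
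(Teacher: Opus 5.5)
Your proposal is correct and follows the same route as the paper's proof. You use Subject Reduction to type the reached session, Lemma~\ref{l:p1} to get $\pp\in\plays{\G'}$, Lemma~\ref{lem:G-participants} to obtain a type trace ending with an action of $\pp$, and Session Fidelity to transfer that trace back to the session, only making explicit the inductions on trace length that the paper leaves implicit when it applies Theorems~\ref{sr} and~\ref{sf} to multi-step reductions.
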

\begin{proof}
Let $\ty\G\Nh$.  Following Definition~\ref{d:lf}, in order to prove Lock Freedom for $\Nh$, 
let $\Nh \SLTS{\sigma}\Nt\parN \Msg$ 
and let $\pp\in\plays{\Nt}$.
By  Subject Reduction (Theorem \ref{sr}) we get   $\ty{\G'}{\Nt\parN \Msg}$  for some $\G'$.
We can now recur to Lemma~\ref{l:p1} and get $\pp \in  \plays{\G'}$.
From the fact that $\pp \in \plays{\G'}$ and by Lemma \ref{lem:G-participants},
it follows that $\G' \parG\Msg \SLTS{\sigma'\cdot\cml\;}  \G''\parG\Msg'$ for some  $\sigma'$ and  $\cml$ with  $\pp \not\in \plays{\sigma'}$ and $\pp \in \plays{\cml}$. Now the thesis follows
by Session Fidelity (Theorem \ref{sf}). 
\qed
\end{proof}

\begin{theorem}[Orphan-message Freedom] 
If \hspace{1pt}$\Nh$ is typable, then $\Nh$ is orphan-message free.
\end{theorem}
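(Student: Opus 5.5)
Let $\ty\G\Nh$ and, following Definition~\ref{d:omf}, suppose $\Nh\SLTS{\sigma}\Nt'\parN\Msg'$ with $\plays{\Nt'}=\emptyset$. The goal is $\Msg'=\emptyset$. The plan is to transfer the problem to typing and then read the answer off the shape of the rules of Definition~\ref{def:type-system}.

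First, I apply Subject Reduction (Theorem~\ref{sr}) once for each label in $\sigma$. This gives a global type $\G'$ with $\ty{\G'}{\Nt'\parN\Msg'}$. By Lemma~\ref{l:p1}, $\plays{\G'}=\plays{\Nt'}=\emptyset$.

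Next, I examine the last rule of the derivation of $\ty{\G'}{\Nt'\parN\Msg'}$. There are two cases.
\begin{itemize}
\item If the rule is \rn{TComm}, then $\G'=\Sigma_{i\in I}\cml_i.\G_i$ with $I\neq\emptyset$. Every communication label has exactly one player, so $\plays{\G'}\supseteq\plays{\cml_i}\neq\emptyset$. This contradicts $\plays{\G'}=\emptyset$. (Alternatively: \rn{TComm} needs a nonempty coherent set, hence some transition of $\Nt'\parN\Msg'$. But every transition comes from \rn{Out} or \rn{In}, both of which require a participant with a non-$\inact$ process, and no such participant exists when $\plays{\Nt'}=\emptyset$.)
\item Hence the rule must be \rn{End}. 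Its conclusion has the form $\ty\End{\pP\pp\inact\parN\emptyset}$, with sessions taken modulo structural congruence. The queue in the conclusion is literally $\emptyset$, so $\Msg'=\emptyset$ up to the queue equivalence $\equiv$.
\end{itemize}

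The only delicate point is making sure that the structural congruence, under which sessions are considered in the type system, cannot alter the queue component. That congruence acts only on networks, by commutativity, associativity and neutral $\pP\pp\inact$. The queue equivalence only permutes messages and so preserves emptiness. Therefore a session congruent to $\pP\pp\inact\parN\emptyset$ must have the empty queue, and the argument goes through. Everything else is a direct use of Theorem~\ref{sr} and Lemma~\ref{l:p1}.
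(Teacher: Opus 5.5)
Your proof is correct and follows the paper's argument: Subject Reduction (Theorem~\ref{sr}) gives a typing of $\Nt'\parN\Msg'$, and since $\Nt'$ is final only Axiom \rn{End} can apply, which forces $\Msg'=\emptyset$. The only difference is that you explain why \rn{TComm} is ruled out (via Lemma~\ref{l:p1}, or equally via its side condition $\plays{\Sigma_{i\in I}\cml_i.\G_i}=\plays\Nt$, or via the absence of transitions), a step the paper leaves implicit.
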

\begin{proof}
Let $\Nh\SLTS{\sigma}\Nt'\parN\Msg'$    with  $\Nt'$ final. 
By Subject Reduction (Theorem \ref{sr}) we get $\ty{\G'}{\Nt'\parN \Msg'}$
for some $\G'$. Since $\Nt'$ is final, $\Nt'\parN \Msg'$ can be typed only by means of 
Axiom \rn{End}. Hence we have necessarily that $\G'=\End$ and $\Msg'=\emptyset$.
\qed
\end{proof}

\medskip

We now formalise the property  that guarantees  any message   at the top of a session's queue 
can be eventually consumed \cite{LY19}.  This also implies that all messages in the session's queue 
can be eventually consumed. We write $\Lambda\in\sigma$ if $\sigma=\sigma'\cdot\Lambda\cdot\sigma''$ for some $\sigma'$, $\sigma''$.

\begin{definition}[Eventual Reception]
A session $\Nt\parG\Msg$  satisfies  
the {\em Eventual Reception} property if \linebreak
$\Nt\parG \Msg  
\SLTS{\sigma}\Nt'\parG\Msg'$, with 
$\Msg'\equiv\mes\pp{\msg{\lambda}}\pq\cdot\Msg''$, implies  that there exists $\sigma'$ 
 such that
$\Nt'\parG\Msg'\SLTS{\sigma'}\SLTS{\pq\pp?\msg{\lambda}}$,  where 
$\pq\pp?\msg{\lambda}\not\in\sigma'$. 
\end{definition}

We let $\mse,\mse'\ldots$ range over the set of messages and $\GG$  range  
over finite sets of global types. We define the {\em weight} of $\mse$ in $\G$, notation $\w\G\mse$,  by coinduction on $\G$.

\begin{definition}[Weight]
 Let $\mse=\mes\pp{\msg{\lambda}}\pq$. We define: 
\begin{enumerate}[i)]
\item~
\Cline{\begin{array}{l}\waux\End\mse\GG=\infty\\ \waux{\cml.\G}\mse\GG=\begin{cases}
  0    & \text{if }\cml=\pq\pp?\msg{\lambda} \\
  \infty    & \text{if }\cml=\pq\pp?\msg{\lambda'} \text{ or }  \cml.\G\in \GG \\
 1+\waux\G\mse{\GG\cup\set{\cml.\G}}     & \text{otherwise}
\end{cases}\\
~\\[-10pt]
\waux{\Sigma_{i\in I} \Lambda_i.\G_i}\mse\GG=min_{i\in I} \waux{ \Lambda_i.\G_i}\mse\GG
\end{array}}
\item
 $\w\G\mse = \waux\G\mse\emptyset$. 
\end{enumerate}
\end{definition}

Notice that $\waux\G\mse\GG$  (and hence $\w\G\mse$)  is well defined since global types are regular.  For example if $\G=\pp\pq!\msg{\lambda}.\pq\pp?\msg{\lambda}.\G + \pp\pr?\msg{\lambda}$, then 
$\w\G{\mes\pp{\msg{\lambda}}\pq}=1$ and $\w\G{\mes\pr{\msg{\lambda}}\pp}=0$. 

\begin{definition}[Soundness]
A type configuration $\G\parG\Msg$ is {\em sound} if $\w\G\mse$ is finite for all \ms s $\mse$ which occur in $\Msg$.
\end{definition}

Let \rn{TCommS} be the typing rule obtained from \rn{TComm} by adding the condition ``$\Sigma_{i\in I} \Lambda_i.\G_i\parG\Msg$ is sound''. We denote by $\vdash_S$ the derivability 
 relation  in the  so  obtained type assignment system.

\begin{theorem}[Eventual Reception]\label{er}
If $\G\vdash_S \Nt\parN\Msg$, then $\Nt\parN\Msg$ satisfies the Eventual Reception property.
\end{theorem}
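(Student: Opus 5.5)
We first reduce the statement to a single step. Since \rn{TCommS} is \rn{TComm} with one extra premise, every $\vdash_S$ derivation is also a $\vdash$ derivation, so Subject Reduction and Session Fidelity (Theorems~\ref{sr} and~\ref{sf}) apply. We must, however, check that Subject Reduction stays inside $\vdash_S$. Its proof builds the derivation for $\G'$ out of subderivations of the given one, together with types obtained by the rule \rn{GI} commuting a label into all branches. Each rule instance produced this way is either an instance of the original derivation, and hence sound, or a \rn{GI}-rebuilt type $\Sigma_{i}\Lambda_i.\G'_i$ paired with the queue $\cml(\Msg)$. For the rebuilt instances we must show soundness is preserved: a message of $\Msg$ keeps a finite weight when $\cml$ is moved out of the branches. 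Moving a label outward can only shorten the paths leading to a matching input, and the message consumed by $\cml$, if any, is removed from the queue. This gives a typed-session reachability statement: if $\G\vdash_S\Nt\parN\Msg$ and $\Nt\parN\Msg\SLTS{\sigma}\Nt'\parN\Msg'$, then $\G'\vdash_S\Nt'\parN\Msg'$ for some $\G'$. It therefore suffices to show that if $\G\vdash_S\Nt\parN\Msg$ and $\Msg\equiv\mes\pp{\msg\lambda}\pq\cdot\Msg''$, then the session can reach a $\pq\pp?\msg\lambda$ step without an earlier identical input.

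For this, let $\mse=\mes\pp{\msg\lambda}\pq$. By soundness of the conclusion of the last \rn{TCommS} application, $\w\G\mse=n<\infty$. We argue by induction on $n$ at the level of type configurations and then transfer to the session by Session Fidelity.

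If $n=0$, then $\G=\Sigma_{i\in I}\Lambda_i.\G_i$ with some $\Lambda_k=\pq\pp?\msg\lambda$. Rule \rn{GE} gives $\G\parG\Msg\SLTS{\pq\pp?\msg\lambda}$, which is defined because $\mse$ heads the $\pp$-to-$\pq$ portion of the queue. Session Fidelity then gives the session step with $\sigma'=\epsilon$.

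If $n>0$, the minimum is attained at some branch $\Lambda_k.\G_k$ with $\w{\G_k}\mse=n-1$, where $\Lambda_k$ is not an input of $\pq$ from $\pp$. The auxiliary set $\GG$ in $\waux{}{}{}$ only cuts cycles, so on a minimal path it never fires and the weight really does decrease along the branch. We need $\G\parG\Msg\SLTS{\Lambda_k}\G_k\parG\Lambda_k(\Msg)$. In the typing, $\Lambda_k$ is in the coherent set, hence enabled in the session, and so $\Lambda_k(\Msg)$ is defined. The message $\mse$ is still at the head of its channel in $\Lambda_k(\Msg)$, because $\Lambda_k$ either pushes a message or pulls one with a different sender/receiver pair or a different tag. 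A tag $\msg{\lambda'}\neq\msg\lambda$ on the same pair cannot be pulled, since $\mse$ is at the head. Session Fidelity yields the matching session step and a $\vdash_S$ typing of the result, with weight $n-1$ for $\mse$. The induction hypothesis then applies. The trace obtained does not contain $\pq\pp?\msg\lambda$ before the final step, because every intermediate label was chosen as a non-matching branch.

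The main obstacle is the bookkeeping of weights. First, the definition returns $\infty$ on a different-tag input $\pq\pp?\msg{\lambda'}$, so a minimal branch never consumes a competing message from the channel. Second, one must prove the lemma $\waux{\Sigma\Lambda_i.\G_i}\mse\emptyset = n>0 \Rightarrow \exists k.\ \w{\G_k}\mse=n-1$, which requires showing that the $\GG$-memoisation agrees with the unmemoised shortest-path distance on regular terms. Third, there is soundness preservation under Subject Reduction for the \rn{GI}-rebuilt types. I would state these two facts, the weight-decrease lemma and the soundness preservation, as separate lemmas and prove them by induction on the finite weight.
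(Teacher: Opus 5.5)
Your reduction step fails: Subject Reduction does \emph{not} stay inside $\vdash_S$, so your ``typed-session reachability statement'' is false. Consequently the reduction to a $\vdash_S$-typed session whose queue begins with $\mse$ does not go through. Your preservation argument covers the messages already in $\Msg$ and the one consumed by $\cml$. It omits the message \emph{pushed} when an output $\cml$ outside the coherent set is commuted by \rn{GI}. That message can sit behind a different tag on the same channel, where the weight is cut to $\infty$.

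Concretely, take $\msg{\lambda}\neq\msg{\lambda'}$. The session $\pP\pp{\pq!\msg{\lambda'}.\pq!\msg{\lambda}}\parN\pP\pq{\pp?\msg{\lambda'}.\pp?\msg{\lambda}}\parN\emptyset$ is typed in $\vdash_S$ by $\pp\pq!\msg{\lambda'}.\pq\pp?\msg{\lambda'}.\pp\pq!\msg{\lambda}.\pq\pp?\msg{\lambda}$. At the second node use the coherent set $\set{\pq\pp?\msg{\lambda'}}$ of the satisfied $\pq$; every node is sound. Yet the session reduces by $\pp\pq!\msg{\lambda'}\cdot\pp\pq!\msg{\lambda}$ to $\pP\pq{\pp?\msg{\lambda'}.\pp?\msg{\lambda}}\parN\mes\pp{\msg{\lambda'}}\pq\cdot\mes\pp{\msg{\lambda}}\pq$. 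Its only type is $\pq\pp?\msg{\lambda'}.\pq\pp?\msg{\lambda}$, and $\w{\pq\pp?\msg{\lambda'}.\pq\pp?\msg{\lambda}}{\mes\pp{\msg{\lambda}}\pq}=\infty$. More generally, soundness forces every channel of a typed node's queue to carry a single tag, and reducts need not respect this. Eventual Reception does hold in this example; it is your route that breaks.

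Your account of old messages is also inaccurate. \rn{GE} discards the siblings of the fired label, so a minimising path can be cut, not merely shortened. Old messages keep a finite weight only because the continuation after the \rn{GE} step is a node of the original derivation, where soundness applies.

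The missing idea is an invariant matched to what Eventual Reception actually asks. Require, at every node of a $\vdash$-derivation, that every message at the \emph{head} of its channel has finite weight. Soundness implies this, and the Subject Reduction construction preserves it:
\begin{itemize}
\item A message that was already head in $\Msg$ is handled along its minimising path. At each node, either $\cml$ fires by \rn{GE}, and the continuation is an old node where the message is still head, or \rn{GI} keeps the top labels, down to the matching input.
\item A message that became head because $\cml$ popped its predecessor, or that $\cml$ pushed into an empty channel, is handled along a path to a node where $\cml$ fires by \rn{GE}. Such a path exists because \rn{GI} requires $\cml$ among the capabilities. The \rn{GI} side condition forbids labels of the player of $\cml$ on that path, so no input from that channel occurs on it. Right after the firing, the message is head in an old node.
\end{itemize}
With this invariant your weight induction goes through as written. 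It needs only $\vdash$ typings, not $\vdash_S$: each step along the minimising branch, together with the typing of its target, is given directly by a premise of \rn{TComm}, since the label lies in the coherent set. So neither Session Fidelity nor soundness is used there.
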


For example, the client/server session of Example~\ref{ex1} satisfies the Eventual Reception property, since we can derive for this session the type $\G$ defined after Definition~\ref{def:type-system} in the type system $\vdash_S$.

\section{Concluding Remarks, Related and Future Works}\label{rfw}
This paper is a companion to~\cite{BD25}, in which we investigated the use of mixed choice
for synchronous communication in the  setting of SMPS~\cite{DGD22,BBD25a}. To guarantee the properties implied by typability, sessions in~\cite{BD25} are considered implicitly composed 
by modules whose participants interact freely via unrestricted mixed  choice.
In contrast, the inter-module communication can be more easily controlled by 
 allowing a restricted form of mixed choice whereby both input and output actions can involve a single participant only. 
This approach does not discard any session a priori. 
We argue that the notion of coherence of communication label sets is abstractly less demanding than session modularisation and can lead to a simpler type system for synchronous mixed choice multiparty sessions
than that described in~\cite{BD25}.
One relevant feature of our type system is that it is conservative, since for processes without mixed choice it coincides with the coinductive version of the type system in~\cite{BD24} when delegation is disregarded. 
 Clearly with asynchronous communication less sessions are stuck than with synchronous communication, the standard example being the network $\pP\pp{\pq!\msg{\lambda}.\pq?\msg{\lambda'}}\parN\pP\pq{\pp!\msg{\lambda'}.\pp?\msg{\lambda}}$. The session obtained adding the empty queue to this network is typable in the current type system, while this network cannot be typed in the type system of~\cite{BD25}.  

In~\cite{CFJ26} processes without mixed choices are directly verified in a compositional way toward global types without local types and projections. It would be challenging to adapt this approach to processes with mixed choices.

 Mixed choice 
 has been also investigated in~\cite{CMV22,PY24b} 
for binary session types,  whereas~\cite{PY24} considers mixed choice in a MPST  setting~\cite{HYC08,Honda2016},  
following the approach of~\cite{ScalasY19} where
global types are not taken into account.  
Although the main concern  of~\cite{PY24} is the expressivity of multiparty calculi 
(according to the full range of possible restrictions of mixed choice), 
type systems assigning local types to processes are provided, 
where various predicates on contexts of local types are investigated.  
In~\cite{PBK23,PBMK24} asynchronous binary sessions with timeout and mixed choice are enriched with 
a semantics guaranteeing Progress and a type system satisfying Subject Reduction. 
 The authors of \cite{MB0B26} developed a
 mechanised higher-order separation logic for proving functional correctness of higher-order imperative programs with mixed-choice multiparty message passing and shared memory.

As previously mentioned, there is a precise correspondence between the participants in our multiparty sessions and Communicating Finite State Machines (CFSMs) ~\cite{bz83, cf05}. 
This opens up the possibility of exploiting the structured SMPS framework to investigate a typically unstructured formalism like CFSM. In particular, the use of mixed choice whose uncontrolled appliance brings with it subtly harmful features, as was exposed decades ago~\cite{GMY80}.
To this end, a structured choreographic framework based on the concept of projection was recently proposed in~\cite{SD25}. In this framework, local views are represented as CFSMs and global views as PSTs (Protocol State Machines). The latter is an automata-based formalism that uses input/output actions as edge labels. Interestingly, PSTs can be shown to be equivalent to our global types, except for the presence of final states in PSTs where the notion of final is as in standard automata for regular language recognition.
This immediately prompts an investigation into the convergence of the SMPS framework with the APM (Automata-based Multiparty Protocols) machinery, which is based on PTS and CFSM. This machinery was proposed in~\cite{SD25} as a foundation for top-down protocol design.
In fact, such a convergence was already hinted at in~\cite{SD25}, where the authors draw attention to the compatibility of their APM machinery with scenarios where an algorithm infers a PSM from CSMs.
Some restrictions are devised in~\cite{SD25} in order to guarantee relevant properties.
One of such restrictions is related, from our setting viewpoint, to the boundedness of 
messages in the queue with the same sender-receiver pair. 
For what concerns our typing system, we are definitely interested in investigating the conditions that ensure the decidability of the typing relation.  This could involve extending the approach used in \cite{BDL23} and \cite{BD24}, where an invariant requirement ensures decidability for
an inductive version of a coinductive system (in the style of~\cite[Section  21.9]{pier02}).
Such invariance implying that any output in a cycle must have
 a corresponding input within that same cycle.  

We envisage other avenues of investigation stemming from the present paper's approach to verifying properties for asynchronous multiparty sessions and, consequently, for CFSMs.
In particular taking inspiration from results like those in ~\cite{DY13} and, more specifically, in ~\cite{LY19}, where conditions ensuring relevant communication properties for CFSMs were devised based on the syntactic properties of projectable global types.
In our present setting, we plan to explore conditions on CFSMs building on
the notion of coherence of communication label sets.
 We also claim that disregarding Item (\ref{cscl2}) from  Definition~\ref{cscl} 
would result in a stronger form of Lock Freedom. This comes at the cost of restricting the set of typable terms. 
We are currently investigating a type system using such a restricted
definition of coherence in a setting where  mixed choices are used as an abstraction for 
timeouts and alternative computation paths. 

\subsection*{Acknowledgements} We wish to warmly 
thank the  anonymous reviewers 
for their thoughtful and helpful comments.

\bibliographystyle{eptcs}
\bibliography{session}

%
%
%

\end{document}